\DeclareMathOperator*{\argmin}{\arg\!\min}
\theoremstyle{plain}
\newtheorem{theorem}{Theorem}[section]
\newtheorem{proposition}[theorem]{Proposition}
\newtheorem{lemma}[theorem]{Lemma}
\theoremstyle{definition}
\newtheorem{assumption}[theorem]{Assumption}
\theoremstyle{remark}
\title{Optimizing Social Network Interventions via \\Hypergradient-Based Recommender System Design}
\author{
 Marino K\"uhne \\
  ETH Z\"urich\\
  \texttt{marino.kuehne@outlook.com} 
   \And
Panagiotis D. Grontas \\
  ETH Z\"urich\\
  \texttt{pgrontas@ethz.ch} 
  \And
 \hspace*{0.7cm}Giulia De Pasquale\\
 \hspace*{0.7cm}Eindhoven University of Technology\\
  \hspace*{0.7cm}\texttt{g.de.pasquale@tue.nl } 
     \And
\hspace*{3pt}Giuseppe Belgioioso \\
\hspace*{3pt}KTH Royal Institute of Technology\\
\hspace*{3pt}\texttt{giubel@kth.se} 
     \And
\hspace*{0.8cm}Florian D\"orfler \\
\hspace*{0.8cm}ETH Z\"urich\\
\hspace*{0.8cm}\texttt{dorfler@ethz.ch}
    \And
\hspace*{1.2cm} John Lygeros \\
  \hspace*{1.2cm} ETH Z\"urich\\
  \hspace*{1.2cm} \texttt{jlygeros@ethz.ch} 
}
\begin{document}

\maketitle

\begin{abstract}
Although social networks have expanded the range of ideas and information accessible to users, they are also criticized for amplifying the polarization of user opinions.
Given the inherent complexity of these phenomena, existing approaches to counteract these effects typically rely on handcrafted algorithms and heuristics.
We propose an elegant solution:  we act on the network weights that model user interactions on social networks 
(e.g., frequency of communication), 
to optimize a performance metric (e.g., polarization reduction), 
while users' opinions follow the classical Friedkin-Johnsen model.
Our formulation gives rise to a challenging large-scale optimization problem with non-convex constraints, for which we develop a gradient-based algorithm.
Our scheme is simple, scalable, and versatile, as it can readily integrate different, potentially non-convex, objectives.
We demonstrate its merit by: (i) rapidly solving complex social network intervention problems with 3 million variables based on the Reddit and DBLP datasets;
(ii) significantly outperforming competing approaches in terms of both computation time and disagreement reduction.
\end{abstract}

\section{Introduction} \label{sec:introduction}
\textit{Problem description and motivation:} 
Social media platforms brought many societal benefits but they have also been associated with the spread of fake news and the polarization of opinions \cite{fake-news, 2_2, 1_26, 2_6, 2_5, 2_8, 2_1, 2_7}. The algorithms employed by social media companies to maintain user engagement are thought to contribute to the isolation of the users into echo chambers, where opinions are segregated by party lines and ties to opposing views are severed \cite{filter-bubble-hypothesis}. Even though the literature suggests that the algorithms of major social media companies may decrease the share of cross-cutting content in users' feeds \cite{1_26, 2_6}, thereby limiting diverse content exposure, the user's choice of engaging with opinion-confirming material tends to be a comparable cause of this isolation \cite{1_32, 2_6, 2_3}.

The field of opinion dynamics studies how the content provided by social media shapes user opinions and how opinions propagate in a social network \cite{1_6}. Users in the social networks are modeled as nodes in a directed, weighted graph, where the edge orientation represents the direction of influence between users and the edge weights represent the intensity of the social influence.
Here we consider the well-established Friedkin-Johnsen (FJ) model  \cite{FJ},  which has been successfully employed in the study of polarization and disagreement in social networks \cite{filterbubble, 1_28, 1_24}.
In the FJ model, users have both a constant internal opinion (prejudice) and a time-varying external, declared opinion. The external opinion of each user evolves by repeated averaging of the internal opinion with the external opinions of interacting users, while the prejudice remains constant.
We propose a unified framework to address a broad class of problems involving the influence of user opinions on social platforms, such as polarization mitigation, through modifications to users' connections when  opinions evolve according to the FJ dynamics.

\textit{Literature review:} The literature on network interventions under the FJ dynamics is vast \cite{BINDEL2015248, 1_7, 1_8, 1_22, 1_24, 1_28, 1_29, 1_30, filterbubble}. All these works seek to achieve various objectives by designing interventions on the social network topology or on the internal opinions of users. 
 The most common approach is based on adding or deleting edges from the network, where the following objectives are considered:   minimization of average- and worst-case opinion dynamics metrics \cite{1_24}, social cost minimization \cite{BINDEL2015248}, and polarization and disagreement minimization \cite{1_30}. Differently, \cite{filterbubble} minimizes disagreement by altering the edge weights. A second, less common approach focuses on changing opinions of specific users \cite{1_7,1_29,1_28}. To achieve a generally favourable opinion about a product, \cite{1_7} sets the opinions of a fixed number of users to a specific value. On the other hand,  \cite{1_29} sets the opinions of a fixed number of users to zero to minimize polarization. Finally, \cite{1_28} modifies opinions to minimize the sum of polarization and disagreement.

If we view the weights\footnote{The same principle holds if internal opinions are considered as hyperparameters, but it is not covered in this work.} of the network as hyperparameters of a model, then choosing the optimal ones becomes a hyperparameter tuning problem.
This defines the upper level of a bilevel optimization problem. In turn, the equilibrium resulting from the repeated averaging of the FJ dynamics can be interpreted as the Nash equilibrium of a game defined by a separate optimization problem \cite{BINDEL2015248}, forming the lower level.
These bilevel optimization problems are difficult to solve -- for example, some formulations in \cite{BINDEL2015248} and \cite{1_29}, are shown to be NP-hard. A few special problems with tailor-made objectives are shown to be convex, such as \citep[Section 4]{1_8} or \cite{1_28}. However, this is not the case in general and the problems are solved with problem-specific algorithms or heuristics such as in \cite{1_7} or \cite{1_29}.

For both large-scale convex and non-convex problems, first-order methods have proven highly effective, as demonstrated in works like \cite{large-scale-2, bottou, KingBa15}.
These methods are particularly appealing due to their computational efficiency and scalability. Recent advancements have extended these approaches to hypergradients \cite{hyperparameter} (the gradient of the objective with respect to hyperparameters), while also analyzing the computational complexity of such methods \cite{Grazzi2020OnTI}. A widely-used technique for obtaining hypergradients involves differentiating through implicit mappings, as explored in \cite{bighype}.

\textit{Contributions:} We consider the FJ model of opinion dynamics. 
We seek to drive the equilibrium user opinions to a desired configuration by modifying network weights.
Our contribution is three-fold:
\begin{enumerate}
    \item We formulate an optimal intervention task as an optimization problem with non-convex constraints,
    that describe the equilibria of the FJ dynamics.
    Our formulation can accommodate any (differentiable) objective, e.g., polarization or disagreement reduction,
    as well as convex constraints modelling intervention limits.
    \item We propose a first-order algorithm that alternates between modifying the network weights and computing both the equilibrium opinions and their gradient with respect to weights,
    by efficiently backpropagating through the FJ dynamics.
    Crucially, both steps are simple and scalable: they consist of a projected gradient update and the solution of two linear systems.
    \item We provide a modular, and well-documented implementation of our algorithm in JAX \cite{jax2018github}, facilitating the usage of hardware accelerators\footnote{We plan to release the code as open-source upon publication of this paper.}. We then test our algorithm on real-world datasets, with up to 3 million modifiable weights, and obtain a solution in a matter of minutes on a GPU.
    Additionally, we compare our approach against the nonlinear programming solver \texttt{IPOPT}, demonstrating up to three orders of magnitude faster runtime while achieving superior solution quality. Finally, compared to an existing heuristic, which is tailored to a subclass within our problem formulation, our scheme obtains significantly better solutions.
\end{enumerate}

\paragraph*{Notation}
Let $\mathbb{R}_{\geq 0}$ ($\mathbb{R}_{>0}$) denote the set of non-negative (positive) real numbers. 
We denote the $(i,j)$ element  of matrix $W$ by $w_{ij}$. 
Let $\lVert \cdot\rVert_F$ and $\lVert \cdot\rVert_2$ denote the Frobenius and the $\ell_2$ norm, respectively. 
The projection of a vector $w$ onto a convex set $\mathcal{W}$ is denoted by $\Pi_\mathcal{W}[w] = \argmin_{z \in \mathcal{W}} \lVert z - w\rVert_2^2$. Let $F : \mathbb{R}^m \times \mathbb{R}^n \rightarrow \mathbb{R}^p$ be a vector-valued differentiable mapping. We denote the partial Jacobians of $F$ with respect to the first and second arguments as $J_1F(w,y) \in \mathbb{R}^{p \times m}$ and $J_2F(w,y) \in \mathbb{R}^{p \times n}$, respectively. If $p = 1$, we use $\nabla_1F$, $\nabla_2F$ to denote the partial gradients.

\section{Problem Formulation} \label{sec:problem}
We consider social networks consisting of interacting users, where a recommendation system (referred to as \emph{leader}) seeks to influence the users' opinions by controlling the interactions between them. This problem can be framed as a bilevel optimization problem, where the upper level involves the leader's interventions, and the lower level models the users, whose opinions are affected by the leader's decisions. Specifically, the leader aims to shift the users' opinions towards a desired configuration by altering network weights. In practice, these weights may represent various metrics, such as the volume of shared content (e.g., videos, images, or text) between two users, with the leader potentially reducing the connection strength by restricting content distribution between them. 


\subsection{Lower Level -- Opinion Dynamics}
We consider a network $G$, represented by a directed weighted graph $G=(V,W)$ where $V\coloneqq\{1, \ldots, n \}$ is the set of nodes, and $W \in \mathbb{R}^{n\times n}$ is the (weighted) adjacency matrix.
For notational convenience, we will reshape \(W\) as a vector \( w \in \mathbb{R}^{n^2} \).
We assume that \( G \) contains no self-loops and has non-negative weights, formally,
\( w \in \mathcal{B} \coloneqq \{ w \in \mathbb{R}^{n^2} \,|\, w_{ii} = 0,\, w_{ij} \geq 0, \, \forall i, j \in V \} \).
An edge between nodes \( i, j \in V \) exists if and only if \( w_{ij} > 0\), and we say
that \( i \) and \( j \) are neighbors.
Interpreting \( G \) as a social network, each of the $n$ nodes corresponds to a user (e.g., individuals, companies, news outlets, etc.), and the edges represent connections between them. For each edge $(i,j)$, the user on the tail $i$ is influenced by the user on the head $j$. The edge is assigned a weight proportional to the influence $j$ exerts on $i$ and could, for example, represent the time $i$ spends consuming content shared over this connection, the frequency of interaction, or any other engagement metric. It is natural to assume that a longer exposure results in a greater influence. According to the FJ opinion dynamics \cite{FJ}, the opinion of user $i$ consists of two parts: the constant internal opinion, or prejudice, $s_i \in \mathbb{R}$ and the time-varying external opinion $y_i \in \mathbb{R}$. 
 We consider opinions to be in the interval $[-1, 1]$, where $-1$ indicates complete opposition and $1$ complete approval. The internal opinion $s_i$ is kept private, while the external opinion $y_i$ is shared with others. Owing to the interactions between users, the external opinion of user $i$ changes over time according to the FJ update rule
\begin{equation} \label{eq:FJ-update}
    y_i^{(t+1)} = \frac{s_i + \sum_{j \in V} w_{ij} y_j^{(t)}}{1 + \sum_{j \in V} w_{ij}},
\end{equation}
and we note that the sums are implicitly restricted to the neighbors of \( i \),
since \( w_{ij} = 0 \) if \( j \) is not a neighbor of \( i \).
The update \eqref{eq:FJ-update} can be understood as a repeated weighted averaging between the internal opinion $s_i$ and the neighbors' opinions $y_j$, where the weights $w_{ij}$ indicate how susceptible user $i$ is to user $j$'s opinion. We can compute the equilibrium of \eqref{eq:FJ-update} as the solution of the linear system of equations
\begin{gather} \label{eq:pg-zero}
    \underbrace{
    \begin{bmatrix}
        1+\sum_{j \in V}w_{1j} & \cdots & -w_{1n}\\
        \vdots & \ddots & \vdots\\
        -w_{n1} & \cdots &  1+\sum_{j \in V}w_{nj}
    \end{bmatrix}
    }_{\coloneqq A(w)}
    y = s, 
\end{gather}
with $s = [s_1, \cdots, s_n]^\top \in \mathbb{R}^n$, $y = [y_1, \cdots, y_n]^\top \in \mathbb{R}^n$, and $A(w) \in \mathbb{R}^{n \times n}$. 
Note that, \( A(w) \) depends on the network weights \( w \), i.e., the vectorized form of the adjacency matrix \( W \).
Next, we show that whenever the weights are appropriately chosen, there exists a unique vector of equilibrium opinions.
\begin{proposition} \label{proposition:unique-y-star}
    Assuming \( w \in \mathcal{B} \), the matrix $A(w)$ is invertible and the FJ dynamics \eqref{eq:FJ-update} admits a unique equilibrium that solves \eqref{eq:pg-zero}.
\end{proposition}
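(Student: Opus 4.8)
The plan is to exploit the structure of $A(w)$: it is strictly row-diagonally dominant, and every such matrix is nonsingular (Levy--Desplanques theorem). Denote the entries of $A(w)$ by $a_{ij}$ and fix a row $i$. Since $w \in \mathcal{B}$ we have $w_{ii} = 0$ and $w_{ij} \geq 0$ for all $j$, so the diagonal entry is $a_{ii} = 1 + \sum_{j \in V} w_{ij} \geq 1 > 0$, while the sum of the magnitudes of the off-diagonal entries of row $i$ is $\sum_{j \neq i} \lvert a_{ij} \rvert = \sum_{j \neq i} w_{ij} = \sum_{j \in V} w_{ij}$, where the last equality uses $w_{ii}=0$. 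Hence
\[
  \lvert a_{ii}\rvert - \sum_{j \neq i} \lvert a_{ij} \rvert = \Bigl(1 + \sum_{j \in V} w_{ij}\Bigr) - \sum_{j \in V} w_{ij} = 1 > 0
\]
for every $i$, so $A(w)$ is strictly diagonally dominant and therefore invertible.

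Next I would show that the equilibria of \eqref{eq:FJ-update} are exactly the solutions of the linear system \eqref{eq:pg-zero}. A vector $y$ is a fixed point of \eqref{eq:FJ-update} if and only if $y_i = \bigl(s_i + \sum_{j \in V} w_{ij} y_j\bigr)/\bigl(1 + \sum_{j \in V} w_{ij}\bigr)$ for all $i$; multiplying by the strictly positive denominator and rearranging gives $\bigl(1 + \sum_{j \in V} w_{ij}\bigr) y_i - \sum_{j \in V} w_{ij} y_j = s_i$, which is precisely the $i$-th equation of $A(w) y = s$. Thus the set of equilibria equals $\{\, y : A(w) y = s \,\}$, and since $A(w)$ is invertible this set is the singleton $\{\, A(w)^{-1} s \,\}$, which proves existence and uniqueness.

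Optionally, I would append a short remark that the same bound also yields convergence of the iteration (not required by the statement, but natural to mention): writing \eqref{eq:FJ-update} as $y^{(t+1)} = D^{-1} s + D^{-1} W y^{(t)}$ with $D = \operatorname{diag}\bigl(1 + \sum_{j \in V} w_{ij}\bigr)$, the nonnegative matrix $D^{-1} W$ has all row sums equal to $\sum_{j \in V} w_{ij} / \bigl(1 + \sum_{j \in V} w_{ij}\bigr) < 1$, so its spectral radius is strictly below one and $y^{(t)}$ converges linearly to the unique equilibrium from any initialization.

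I do not anticipate a genuine obstacle here; the argument is a routine application of diagonal dominance. The only points requiring care are (i) remembering that $w_{ii} = 0$, so the diagonal of $A(w)$ is not inflated by a spurious $w_{ii}$ term and the off-diagonal absolute row sum is exactly $\sum_{j} w_{ij}$, and (ii) using the assumption $w \in \mathcal{B}$ (non-negativity) so that $\lvert -w_{ij}\rvert = w_{ij}$ and the additive ``$+1$'' in the diagonal is what produces the \emph{strict} inequality; without non-negativity the diagonal-dominance estimate would fail.
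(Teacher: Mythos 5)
Your argument is correct and is exactly the paper's approach: the paper's proof consists of the single observation that $A(w)$ is strictly row diagonally dominant and hence invertible, which you have spelled out in full (including the correct use of $w_{ii}=0$ and $w_{ij}\geq 0$) together with the routine equivalence between fixed points of \eqref{eq:FJ-update} and solutions of \eqref{eq:pg-zero}. The optional convergence remark is also correct but not part of the paper's claim.
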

The proposition holds since $A(w)$ is strictly row diagonally dominant and therefore invertible \citep[Th. IV]{dd}.

\subsection{Upper Level -- Network Intervention} \label{subsec:ul-intervening-in-the-network}
The goal of the leader is to find weights $w \in \mathbb{R}^{n^2}$ that minimize a cost function $\varphi: \mathbb{R}^{n^2} \times \mathbb{R}^n \rightarrow \mathbb{R} $,
mapping \( w \) and the resulting solution of \eqref{eq:pg-zero} to some upper-level objective.
\begin{assumption} \label{assumption:phi}
   $\varphi(w,y)$ is continuously differentiable in both arguments.
\end{assumption}
A possible choice is given by $\varphi(w, y) \coloneqq \frac{1}{n} \lVert y\rVert_2^2$, a classical measure of polarization, previously used by \cite{1_29}. In practice, the leader cannot modify the network weights arbitrarily, as such interventions directly impact the users' experience 
(e.g., consider the case where a connection between two close friends is removed). 
We model permissible interventions as a constraint $w \in \mathcal{W}$, obeying the following assumption.
\begin{assumption} \label{assumption:W}
   The set of constraints $\mathcal{W}$ is non-empty, closed, and convex,
   and \( \mathcal{W} \subseteq \mathcal{B} \).
\end{assumption}
The set \(\mathcal{W}\) can be used to add constraints that encode requirements such as:
(i) some connections cannot be formed or altered,
by fixing \( w_{ij} \) to zero or to its initial value, respectively;
(ii) ensuring that some connections are maintained by imposing lower bounds on some \( w_{ij} \).
Further, \(\mathcal{W} \) ensures that weights are compatible with our assumption \( w \in \mathcal{B} \).

Note that many of these constraints decrease the number of degrees of freedom in the optimization problem, 
for example by fixing some weights to a given value. 
In principle, one can maintain all the degrees of freedom and enforce this requirement through the set \( \mathcal{W} \). 
Computationally, however, it is more efficient to drop the corresponding decision variables from \( w\) from the beginning, giving rise to a decision vector of dimension lower than \( n^2 \).
We always drop the weights $w_{ii}$ from $w$, as self-loops are absent, reducing the number of decision variables to $m = n(n - 1)$. Throughout the remainder of this paper, $m$ denotes the effective number of decision variables.

Combining the upper-level objective $\varphi$ with the equilibrium of the lower level \eqref{eq:pg-zero} and permissible interventions in $\mathcal{W}$, 
we formulate the following optimization problem:
\begin{mini!}
    {w,y} {\varphi(w,y) \label{optimizationProblem}} 
    {\label{eq:high-level}}
    {}
    \addConstraint{A(w)y }{= s \label{high-level-problem-const-1}}
    \addConstraint{w }{\in \mathcal{W}. \label{high-level-problem-const-2}}
\end{mini!}

The bilinear constraint \eqref{high-level-problem-const-1} renders the feasible set non-convex. 
We rewrite \eqref{eq:high-level} as
\begin{mini!}
    {w} {\varphi(w,y^\star(w)) \label{optimizationProblem3}} 
    {\label{eq:optimization-with-system-easy}}
    {}
    \addConstraint{w}{\in \mathcal{W}, \label{high-level-easy-problem-const-2}}
\end{mini!}
where $y^\star(w)$ denotes the solution to the system $A(w) y = s$. 
We refer to $y^\star(\cdot): \mathbb{R}^{m} \to \mathbb{R}^n$ as the equilibrium mapping,
which is well-posed under Assumption \ref{assumption:W}. 
Formulation \eqref{eq:optimization-with-system-easy} is favorable since $y^\star(\cdot)$ is continuously differentiable (see Proposition \ref{prop:sensitivity}), hence  motivating us to deploy a first-order solution method.

Although \eqref{eq:optimization-with-system-easy} involves convex constraints, it is still challenging to solve, as $y^\star(\cdot)$ is an implicit mapping, for which a closed-form expression is, in general, not available. Moreover, the objective is non-convex.

\section{Algorithm Design} \label{sec:algorithm}
To solve \eqref{eq:optimization-with-system-easy}, we propose a gradient-based algorithm. The main challenge lies in obtaining the \textit{hypergradient}, i.e., 
the gradient of the objective $\varphi(w,y^\star(w))$ with respect to $w$. 
This can be achieved by using the chain rule as follows:
\begin{equation} \label{eq:hypergradient}
\begin{aligned}
    \nabla_w \varphi(w, y^\star(w)) = & \nabla_1 \varphi(w, y^\star(w)) \\
    & + Jy^\star(w)^\top \nabla_2 \varphi(w,y^\star(w)). 
\end{aligned}
\end{equation}
Under Assumption \ref{assumption:phi}, the partial gradients $\nabla_1 \varphi$ and $\nabla_2 \varphi$ are well-defined.
However, ensuring the existence of the Jacobian of the equilibrium mapping, $Jy^\star(w)$, also called the \textit{sensitivity}, and computing it is a non-trivial task.

To compute $Jy^\star(w)$, we introduce the auxiliary expression $F(w,y) := A(w)y -s$.
In the following we provide an explicit expression for the sensitivity.
\begin{proposition} \label{prop:sensitivity}
    Under Assumption \ref{assumption:W}, the function $y^\star(\cdot)$ is continuously differentiable and its Jacobian $Jy^\star(w)$ is well-defined. Moreover, $Jy^\star(w)$ is given by 
    \begin{equation} \label{eq:sensitivity}
        Jy^\star(w) = - J_2F(w,y^\star(w))^{-1} J_1F(w,y^\star(w)).
    \end{equation}
\end{proposition}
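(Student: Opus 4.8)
The plan is to derive \eqref{eq:sensitivity} by applying the Implicit Function Theorem (IFT) to the residual map $F(w,y) = A(w)y - s$ introduced before the statement, whose zeros in $y$ are exactly the equilibria characterized by \eqref{eq:pg-zero}. First I would record the two structural facts that make the IFT applicable. (i) $F$ is jointly continuously differentiable: each entry of $A(w)$ is affine in $w$, so $F$ is a polynomial (indeed bilinear) map in $(w,y)$. (ii) For every admissible $w$, the equation $F(w,y)=0$ has the unique solution $y^\star(w)$ by Proposition \ref{proposition:unique-y-star}, so the identity $F(w, y^\star(w)) = 0$ holds; moreover a direct differentiation gives $J_2F(w,y) = A(w)$, which under Assumption \ref{assumption:W} (hence $w \in \mathcal{B}$) is invertible, again by Proposition \ref{proposition:unique-y-star}.

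Next I would address the domain subtlety, which is the one point that genuinely needs care. The set $\mathcal{W}$ is only closed and may contain boundary points of $\mathcal{B}$ (where some $w_{ij}=0$), so to invoke the IFT I need $A(w)$ invertible on an \emph{open} set around each such point. This follows because $\det A(\cdot)$ is a polynomial that does not vanish on $\mathcal{B}$ by Proposition \ref{proposition:unique-y-star}; being the preimage of $\mathbb{R}\setminus\{0\}$ under a continuous map, the set $\mathcal{O} := \{w : \det A(w) \neq 0\}$ is open and satisfies $\mathcal{O} \supseteq \mathcal{B} \supseteq \mathcal{W}$. On $\mathcal{O}$ the IFT then yields, around each point, a unique continuously differentiable solution map of $F(w,\cdot)=0$; by uniqueness of the FJ equilibrium this local map must coincide with $y^\star$, so $y^\star$ is continuously differentiable on $\mathcal{O}$, and in particular on $\mathcal{W}$.

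Finally I would differentiate the identity $F(w, y^\star(w)) = 0$ with respect to $w$ using the chain rule,
\begin{equation*}
    J_1F(w, y^\star(w)) + J_2F(w, y^\star(w))\, Jy^\star(w) = 0,
\end{equation*}
and solve for $Jy^\star(w)$ using the invertibility of $J_2F(w,y^\star(w)) = A(w)$, which produces exactly \eqref{eq:sensitivity}. Continuity of $w \mapsto Jy^\star(w)$ is then inherited from the continuity of $J_1F$, $J_2F$, $y^\star$, and of matrix inversion on $\mathcal{O}$.

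I expect the only real obstacle to be the openness/domain argument of the second paragraph; the remainder is the textbook IFT computation. As an alternative that sidesteps the IFT, one can instead write $y^\star(w) = A(w)^{-1}s$ directly and note that, by Cramer's rule, each component is a ratio of polynomials in $w$ whose denominator $\det A(w)$ does not vanish on $\mathcal{O}$; hence $y^\star$ is smooth on $\mathcal{O}$, and differentiating $A(w)y^\star(w) = s$ again yields \eqref{eq:sensitivity}. I would present the IFT version, as it matches the implicit-mapping viewpoint used throughout the paper, and possibly mention the Cramer's-rule route as a remark.
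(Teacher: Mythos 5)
Your proposal is correct and follows essentially the same route as the paper: both establish that $F(w,y)=A(w)y-s$ is continuously differentiable with invertible partial Jacobian $J_2F(w,y)=A(w)$ (via Proposition \ref{proposition:unique-y-star}) and then invoke the implicit function theorem to obtain existence, continuity, and the formula \eqref{eq:sensitivity}. Your extra care in extending invertibility of $A(\cdot)$ to an open neighborhood of $\mathcal{W}$ via continuity of $\det A(\cdot)$ addresses a point the paper's proof leaves implicit, and is a sensible refinement rather than a different approach.
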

\begin{proof}
    From the definition $F(w,y) := A(w)y -s$, it immediately follows that $J_2F(w,y) = A(w)$,
    which is continuous in both arguments.
    Furthermore, $J_1F(w,y)$ is continuous in both arguments by Lemma \ref{lemma:j1f} in Appendix~\ref{subsec:computing-j1f}. Thus, $F(w,y)$ is continuously differentiable. Additionally, by Proposition \ref{proposition:unique-y-star}, $A(w)$ and therefore $J_2F(w,y)$ are nonsingular. By invoking the implicit function theorem \citep[Th. 1B.1]{Dontchev2009}, the sensitivity exists, is continuous, and given by \eqref{eq:sensitivity}.
\end{proof}

Given $Jy^\star(w)$, we can evaluate the hypergradient and solve \eqref{eq:optimization-with-system-easy}
using our preferred first-order method.
In this work, we deploy projected gradient descent with momentum \cite{Bolduc2017} given by
\begin{equation} \label{eq:momentum-pgd}
    \begin{aligned}
    m^{(k+1)} &= \gamma m^{(k)} + \nabla_w \varphi \big(w^{(k)}, y^\star \big(w^{(k)}\big)\big) \\
    w^{(k+1)} &= \Pi_\mathcal{W} \big[w^{(k)} - \alpha^{(k)} m^{(k+1)}\big],
    \end{aligned}
\end{equation}
where $\alpha^{(k)} \in [0,1]$ is a possibly iteration-dependent step-size and  $\gamma \geq 0$ is a momentum parameter. This formulation contains a short-term memory for the descent direction of the previous iteration $k$ in the form of $m^{(k)}$, and it can often significantly decrease the runtime for an appropriate choice of $\gamma$ \cite{goh2017why}. The projection is employed to ensure feasibility of the iterates $w^{(k)}$.


\subsection{Backpropagation through Equilibrium Opinions} \label{subsec:efficiency}
Computing $Jy^\star(w)$ by solving the matrix equation \eqref{eq:sensitivity}
can be computationally challenging for large-scale problems \cite{Parise}. 
Concretely, we need to solve
\begin{equation*}
    \underbrace{J_2F(w,y^\star(w))}_{\mathbb{R}^{n\times n}} \underbrace{Jy^\star(w)}_{\mathbb{R}^{n \times m}} = - \underbrace{J_1 F(w, y^\star(w))}_{\mathbb{R}^{n \times m}},
\end{equation*}
for a given $w$, and therefore fixed $y^\star(w)$, with matrix sizes as indicated. 
To reduce the computational burden, note that we do not need the sensitivity $Jy^\star(w)$ by itself in \eqref{eq:hypergradient} 
but rather the vector-Jacobian product $Jy^\star(w)^\top \nabla_2 \varphi(w,y^\star(w))$, 
see e.g., \cite{Grazzi2020OnTI}. 
Substituting \eqref{eq:sensitivity} in \eqref{eq:hypergradient}, yields:
\begin{align}
   \nabla_w \varphi(w, & y^\star(w)) 
     = \nabla_1 \varphi(w, y^\star(w)) - J_1 F(w,y^\star(w))^\top \notag \\
     & \big(J_2 F(w, y^\star(w))^\top\big)^{-1} \nabla_2 \varphi(w,y^\star(w)) \notag \\
     = & \nabla_1 \varphi(w, y^\star(w)) - J_1 F(w,y^\star(w))^\top v, \label{eq:hypergradient-trick}
\end{align}
where $v \in \mathbb{R}^n$ solves the linear system of equations
\begin{equation} \label{eq:v}
    J_2 F(w, y^\star(w))^\top v = \nabla_2 \varphi(w,y^\star(w)).
\end{equation}
Hence, we can compute the hypergradient by solving \eqref{eq:v} instead of \eqref{eq:sensitivity}, i.e.,
we solve one linear system instead of \( m \).

\begin{algorithm}[tb]
   \caption{BeeRS}
   \label{alg:algorithm}
\begin{algorithmic}[1]
   \STATE {\bfseries Input:} Tolerance $\varepsilon$, step size $\alpha^{(k)}$, momentum param.~$\gamma$
   \STATE $w^{(0)} \leftarrow$ Set initial network weights
   \WHILE{True}
    \STATE Compute $J_2F \big(w^{(k)} \big)$
    \STATE $y^\star \big( w^{(k)} \big) \leftarrow$ Solve \eqref{eq:pg-zero}
    \STATE \textbf{if} {$\zeta^{(k)} \leq \varepsilon$}~~\textbf{then}~~\text{Break}
    \STATE Compute $J_1F \big(w^{(k)},y^\star \big(w^{(k)} \big) \big)$
    \STATE Compute gradients $\nabla_{1,2} \varphi\big(w^{(k)}, y^\star\big(w^{(k)}\big)\big)$ 
    \STATE $v^{(k)} \leftarrow$ Solve \eqref{eq:v}
    \STATE $\nabla_w \varphi \big(w, y^\star(w) \big) \leftarrow$ Compute \eqref{eq:hypergradient-trick}
    \STATE $w^{(k+1)} \leftarrow$ Using \eqref{eq:momentum-pgd}
    \ENDWHILE
   \STATE {\bfseries Output:} 
   $w^\star$, 
   $\varphi\big(w^\star,y^\star \big(w^\star\big)\big)$
\end{algorithmic}
\end{algorithm}

Combining the equilibrium backpropagation \eqref{eq:v} with the chain rule \eqref{eq:hypergradient-trick} yields Algorithm \ref{alg:algorithm}, which we refer to as \emph{BeeRS} (\underline{Be}st Int\underline{e}rvention for \underline{R}ecommender \underline{S}ystems). 
As termination criterion we use
\begin{equation*}
\zeta^{(k)} \coloneqq   \frac{\lvert\varphi (w^{(k-1)}, y^\star (w^{(k-1)} )) - \varphi (w^{(k)}, y^\star (w^{(k)} ))\rvert}{\lvert\varphi (w^{(k-1)}, y^\star (w^{(k-1)} ))\rvert}  \leq \varepsilon
\end{equation*}
for some small tolerance $\varepsilon > 0$. Implementation details are given in Appendix \ref{subsec:implementation-detail}.

\section{Numerical Simulations \& Comparisons} \label{sec:numerics}

In this section, we demonstrate the effectiveness of our algorithm by deploying it on real-world and artificially-generated datasets. 
In Section \ref{subsec:scalability}, we study the scalability of \hyperref[alg:algorithm]{BeeRS} on both a CPU and a GPU. In Section \ref{subsec:ipopt}, we compare \hyperref[alg:algorithm]{BeeRS} to \texttt{IPOPT}, a state-of-the-art nonlinear programming solver that can directly solve the non-convex program \eqref{eq:high-level}, while in Section \ref{subsec:literature} we compare against a competing algorithm proposed in \cite{filterbubble}. 
Hardware specifications are given in Appendix \ref{app:hardware}, and the
subsequent choices for the step size $\alpha$ and momentum parameter $\gamma$ are justified in Appendix \ref{app:hyperparameters}.


\subsection{Scalability Study} \label{subsec:scalability}

\subsubsection{Problem Setup}
We consider a social network with directed connections. Let us consider the problem where a newly founded news agency, labeled as $\zeta$, wants to establish itself in the network by sharing content such as news articles. The news agency is neutral, which is encoded by the internal opinion $s_\zeta = 0$, while all other users (readers) might be biased with internal opinions $s_i \in [-1, 1]$. 
The network operator (the leader) has to decide how to deliver the additional content to the existing users,
with the goal of minimizing polarization in the network.
Additionally, the operator charges a fee to the news agency for distributing its content, which is proportional to the amount of content shared over the network. This expense can be viewed as an advertisement fee, reflecting the fact that getting space over the platform is costly.
The agency's advertising budget is bounded  by $b \in \mathbb{R}_{>0}$.

To model this, we introduce a new connection from every existing user \( i \) in the network to the news agency $\zeta$. Since the network is directed and the news agency does not have any outgoing connections, 
it remains neutral hence
$y_\zeta = s_\zeta = 0$. 
The connections from users to the news agency carry a weight $w_{i \zeta} \geq 0$, which is proportional to the share of $\zeta$'s content in user $i$'s feed. 
The goal of the leader is to compute weights that minimize polarization, measured as $\frac{1}{n} \sum_{i=1}^n y_i^2$, 
under the news agency's budget constraint $\sum_{i=1}^n w_{i \zeta} \leq b$. 

We can formalize this problem as
\begin{mini!}
    {\substack{y \in \mathbb{R}^n, w \in \mathcal{B}}} {\frac{1}{n} \textstyle\sum\nolimits_{i=1}^n y_i^2 \label{simple-problem-pol}} 
    {\label{eq:problem-ipopt}}
    {}
    \addConstraint{A(w)y}{= s \label{simple-problem-const-1}}
    \addConstraint{w_{ij} = }{w_{ij}^{(0)},~\forall i,j \in V \setminus \zeta \label{simple-problem-const-2}}
    \addConstraint{w_{\zeta j} = }{0,~\forall j \in V \label{simple-problem-const-4}}
    \addConstraint{\textstyle\sum\nolimits_{i = 1}^n w_{i\zeta}}{\leq b, \label{simple-problem-const-3}}
\end{mini!}
where the variance objective is a standard polarization metric \cite{1_29}, \eqref{simple-problem-const-1} ensures that opinions are in equilibrium and formed according to the FJ dynamics, 
\eqref{simple-problem-const-2} fixes the weights that are unrelated to the news agency to their original values \(w^{(0)}\), 
\eqref{simple-problem-const-4} ensures \(\zeta\) has no outgoing connections,
and \eqref{simple-problem-const-3} enforces the budget constraint of \(\zeta\).

We stress that weights that are fixed by constraints, as in \eqref{simple-problem-const-2} and \eqref{simple-problem-const-4},
are shown for clarity of exposition.
In our numerical implementation, we remove these weights to reduce the dimensionality of the problem.
Therefore, \eqref{eq:problem-ipopt} ends up with \( m = n - 1 \) decision variables for the weights.

\subsubsection{Numerical Results}
We solve problem \eqref{eq:problem-ipopt} on the DBLP dataset \cite{dblp-dataset}. It describes citation relationships between scientific articles: Nodes represent articles, and edges indicate citations between them, where the weight on the edge from node $i$ to node $j$ is defined as $w_{ij} = 1$ if article $i$ cites article $j$, and $w_{ij} = 0$ otherwise. By the nature of the dataset, there are no self-loops (a paper does not cite itself). We utilize the dataset topology by giving it a social network interpretation. Specifically, we reinterpret the nodes as users and the edges as a ``following`` relation: user $i$ follows user $j$ if and only if $w_{ij} > 0$. In this context, content shared by user $j$ appears in user $i$'s feed. We then assign internal opinions uniformly at random from the set $\{-1, 1\}$.

We consider \eqref{eq:problem-ipopt} for a varying number of users $n$ to study the computational cost of evaluating the hypergradient. To do so, we take the first $n$ nodes and corresponding internal opinions. We set $b = \frac{n}{10}$, $\alpha = 1000$, and $\gamma = 0.95$.
We execute 1 iteration of \hyperref[alg:algorithm]{BeeRS} for every problem size 100 times on the CPU and on the GPU. We report mean runtime $\pm$~1~standard deviation over 100 runs in Figure \ref{fig:scalability}.

\begin{figure}[ht]
\begin{center}
\centerline{\includegraphics[scale=1]{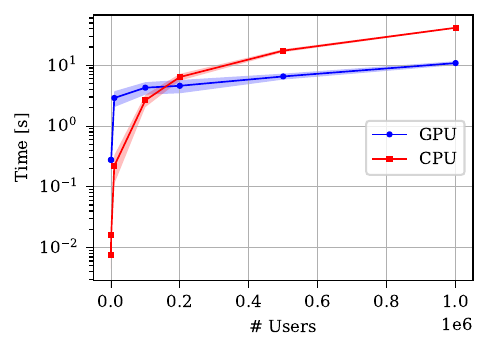}}
\caption{The mean runtime and $\pm$ 1 standard deviation over 100 runs of 1 iteration of \hyperref[alg:algorithm]{BeeRS} on both a GPU and a CPU.}
\label{fig:scalability}
\end{center}
\end{figure}

We observe that for small numbers of users, the CPU implementation has a lower per-iteration runtime. This is not surprising, given the additional overhead produced by moving computations to the GPU and by JAX's just-in-time compilation, among other reasons. However, the GPU implementation becomes faster for more than 100,000 users.

\subsubsection{Solving on the Entire Dataset} \label{subsec:entire-dblp}
Given the remarkable scalability of \hyperref[alg:algorithm]{BeeRS} on the GPU, we deploy it on the entire DBLP dataset consisting of 
$n~=~3,079,007$ nodes and $25,166,994$ edges. 
We solve \eqref{eq:problem-ipopt} 10 times with $b = \frac{n}{10}$, $\alpha = 10,000$, $\gamma = 0.9$, and $\varepsilon = \text{1e-3}$ on the GPU.
\hyperref[alg:algorithm]{BeeRS} achieves a polarization reduction of 38.6\%. The mean runtime until convergence is 664.5~s with standard deviation of 22.4 s. 


\subsection{Comparison with \texttt{IPOPT}} \label{subsec:ipopt}
We now consider the problem setup from Section \ref{subsec:scalability} and solve problem \eqref{eq:problem-ipopt} using the same dataset with \texttt{IPOPT}. 


\subsubsection{Numerical Results}
We compare the CPU implementation of \hyperref[alg:algorithm]{BeeRS} to the nonlinear solver \texttt{IPOPT} \cite{ipopt}, using the cyipopt \cite{cyipopt} wrapper for Python. For \texttt{IPOPT}, we use a tolerance of $\text{1e-3}$ and constraint-violation tolerance of $\text{1e-4}$. For \hyperref[alg:algorithm]{BeeRS}, we use $\varepsilon=\text{1e-3}$, $\gamma = 0.95$, and $\alpha~=~\frac{n}{100}$
proportional to $n$ as discussed in Appendix \ref{subsubsec:hyperparams_different_n}. We solve \eqref{eq:problem-ipopt} with both methods for a varying number of users $n$, and repeat each simulation 10 times. We set $b=\frac{n}{10}$.

We report the average runtime $\pm$ 1 standard deviation for different problem sizes for both approaches in Figure \ref{fig:ipopt-comp}. Note that the number of nodes considered here is significantly smaller than in Figure \ref{fig:scalability} due to the computational limitations of \texttt{IPOPT}.
We observe that \hyperref[alg:algorithm]{BeeRS} outperforms \texttt{IPOPT} for all simulated $n$, and by a larger margin for larger \( n \).
Indeed, the runtime difference ranges between roughly 1 and 3 orders of magnitude for the largest problems. 

In terms of solution quality, \hyperref[alg:algorithm]{BeeRS} attains the lowest cost in all cases, whereas the \texttt{IPOPT} implementation exhibits relative suboptimality\footnote{We define the relative suboptimality as $\frac{\lvert\varphi - \varphi^\star \rvert}{\lvert\varphi^\star\rvert}$, where $\varphi^\star$ is the best solution achieved by any of the two approaches.} of roughly 4e-3.
Detailed results are reported in Appendix~\ref{subsec:additional-ipopt}. 

\begin{figure}[ht]
\begin{center}
\centerline{\includegraphics[scale=1]{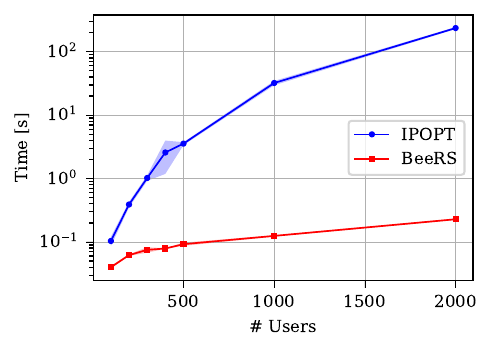}}
\caption{The mean runtime and $\pm$ 1 standard deviation over 10 runs of \hyperref[alg:algorithm]{BeeRS} and \texttt{IPOPT} (until convergence) on a CPU.}
\label{fig:ipopt-comp}
\end{center}
\end{figure}

\subsection{Comparison with the Literature} \label{subsec:literature}

Next, we compare \hyperref[alg:algorithm]{BeeRS} to the algorithm in \cite{filterbubble}. There, the authors propose an extension to the FJ dynamics, called NAD (\underline{N}etwork \underline{A}dministrator \underline{D}ynamics), to highlight the potential pitfalls of recommender systems that naively minimize disagreement, defined as $D(w,y) \coloneqq \frac{1}{2} \sum_{i=1}^n \sum_{\substack{j = 1}}^n w_{ij} (y_i - y_j)^2$.
In particular, they investigate the change in polarization in the network before and after modifying the weights to minimize disagreement
using NAD.


\subsubsection{Problem Setup}
The NAD follows an iterative approach, as summarized in Algorithm\ \ref{alg:nad} in Appendix \ref{app:nad}. In a first step, equilibrium opinions $y^\star(w^{(k)})$ are computed based on the current weights $w^{(k)}$.
In a second step, disagreement $D(w,y^\star(w^{(k)}))$ is minimized by tweaking the weights $w$ under the assumption that the opinions remain unaffected by the new weights and under convex constraints. In particular, the modified network should be similar to the original one, and the out-degree of each node should be preserved. The second step  is formalized as the convex program
\begin{mini!}
    {\substack{w^{(k+1)} \in \mathcal{B}}} {D(w^{(k+1)}, y^\star(w^{(k)})) \label{objective-reddit}} 
    {\label{eq:problem-reddit-simplified}}
    {}
    \addConstraint{ \textstyle\sum\nolimits_{j=1}^{n} (w_{ij}^{(k+1)} - w_{ij}^{(0)}) =0,}{\forall i \in V \label{fb-const-degree}}
    \addConstraint{\lVert W^{(k+1)} - W^{(0)}\rVert_F}{\leq \delta \lVert W^{(0)}\rVert_F, \label{fb-const-frob}}
\end{mini!}
where $\delta > 0$ is a parameter and $W^{(0)}$ denotes the adjacency matrix of the original network with weights $w_{ij}^{(0)}$. Intuitively, constraint \eqref{fb-const-frob} ensures that users of the social network do not perceive major differences in shared content and \eqref{fb-const-degree} guarantees that they spend constant time on the network. The iterations continue until $w^{(k)}$ converges.

\cite{filterbubble} note that in certain cases \hyperref[alg:nad]{NAD} fails to reduce disagreement and polarization $P(y)~\coloneqq~\sum_{i=1}^n (y_i - \frac{1}{n} \sum_{i = 1}^n y_i)^2$ increases. To mitigate the problem, they propose adding a regularization term $\lambda \lVert W \rVert_F^2$ with a parameter $\lambda > 0$ to the objective \eqref{objective-reddit}, but the method still struggles to reduce disagreement. We aim to improve upon these results.

We observe that \hyperref[alg:algorithm]{BeeRS} can be deployed on \eqref{eq:problem-reddit-simplified}, without the need to assume opinions are fixed during the optimization iterations. With \hyperref[alg:algorithm]{BeeRS}, we can directly estimate the effects of a weight change on the equilibrium opinions (and therefore the effect on the objective) through the sensitivity $Jy^\star(w)$. We readapt the original problem as
\begin{mini!}
    {\substack{y \in \mathbb{R}^n, w \in \mathcal{B}}} {\frac{1}{2} \textstyle\sum\nolimits_{i=1}^n 
    \textstyle\sum\nolimits_{\substack{j = 1}}^n w_{ij} (y_i - y_j)^2 \label{optimizationProblem2}} 
    {\label{eq:problem-reddit}}
    {}
    \addConstraint{\eqref{fb-const-degree}}{\text{ and } \eqref{fb-const-frob} \label{reddit-problem-const-0}}
    \addConstraint{A(w)y}{= s. \label{reddit-problem-const-1}}
\end{mini!}
We add constraint \eqref{reddit-problem-const-1} to take into account the dynamic opinions $y$.
We use $\delta = 0.2$ for all subsequent results.

\subsubsection{Case Study: Toy Example} \label{subsec:toy-example}

To better understand how the two optimization strategies compare with each other, we use a minimal network that allows for disagreement. It consists of two nodes with internal opinions $s_0 = 1$ and $s_1 = 0$, connected by an undirected\footnote{We model the undirected edge weight $w$ with the two variables $w_{01}$ and $w_{10}$, and the constraint $w_{01} = w_{10}$.} edge with initial weight $w^{(0)} = 1$. In this network, we aim to solve \eqref{eq:problem-reddit} without constraint \eqref{fb-const-degree} (as enforcing this constraint would restrict the allowed weight to $w = w^{(0)} = 1$). To evaluate the behavior of \hyperref[alg:nad]{NAD} we first compute the resulting equilibrium opinions by solving \eqref{eq:pg-zero} for appropriate $A(w)$ (\hyperref[line:step1]{line 2}). We find $y_0^\star(w) = \frac{2}{3}$ and $y_1^\star(w) = \frac{1}{3}$. Disagreement is given by $D(w,y^\star(w))=0.11$. In a second step, \hyperref[alg:nad]{NAD} minimizes disagreement by keeping the opinions constant (\hyperref[line:step2]{line 3}). Recalling the definition of disagreement, we see that $D(w,y)$ decreases for fixed $y$ if and only if $w$ is decreased. Therefore, \hyperref[alg:nad]{NAD} decreases $w$, in order to minimize disagreement. Then, \hyperref[alg:nad]{NAD} proceeds by computing the equilibrium opinions under the updated weight $w^{(k+1)}$, followed by a reduction of the edge weight $w$ to minimize disagreement for fixed opinions in a second step. It follows that the edge weight is decreased until constraint \eqref{fb-const-frob} is tight, which, for $\delta = 0.2$, occurs when $w = 0.8$.

On the other hand, \hyperref[alg:algorithm]{BeeRS} updates the weight in the direction of steepest descent, taking into account the effect on opinions. By deploying it on such a simple problem, we observe that, as opposed to \hyperref[alg:nad]{NAD}, \hyperref[alg:algorithm]{BeeRS} \emph{increases} the weight to $w = 1.2$.  To compare the performance of \hyperref[alg:algorithm]{BeeRS} and \hyperref[alg:nad]{NAD} in terms of disagreement, we compute equilibrium opinions for all $w \in [0, 2]$ by solving \eqref{eq:pg-zero}, and we illustrate the resulting disagreement in Figure \ref{fig:algorithm-comparison}. We mark the solution of both algorithms, and we observe that \hyperref[alg:algorithm]{BeeRS} outperforms \hyperref[alg:nad]{NAD}. We argue that the reason for this discrepancy comes from the fact that our algorithm is able to anticipate the effect of changing weights on the equilibrium opinions and, hence, move in a direction that minimizes disagreement. By contrast, \hyperref[alg:nad]{NAD} myopically optimizes weights, neglecting the dynamics of opinions, and thus exacerbates disagreement.

\begin{figure}[ht]
\begin{center}
\centerline{\includegraphics[scale=1]{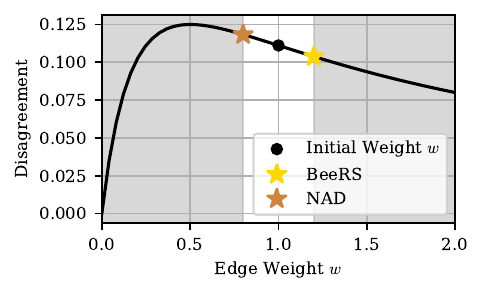}}
\caption{Behavior of \hyperref[alg:algorithm]{BeeRS} and \hyperref[alg:nad]{NAD} (without regularization) in a toy example network.
We plot disagreement as a function of weights $w$ for internal opinions $s_0 = 1$ and $s_1 = 0$. The feasible region is shown in white.}
\label{fig:algorithm-comparison}
\end{center}
\end{figure}

\subsubsection{Numerical Results: Addressing a Real-World Problem}

Next, we deploy both algorithms on real-world data. We use the undirected Reddit dataset \cite{reddit}, where external opinions $y$ were extracted by sentiment analysis. Analogously to \cite{filterbubble}, we set the internal opinions\footnote{\cite{filterbubble} use $s \in [-1,1]$ in their formulation. However, consistent with their implementation \cite{filterbubble-repo}, we adopt $s \in [0, 1]$.} $s \in [0,1]$ by computing $s = A(w)y$. The dataset consists of 556 nodes and 8969 edges. We remove 3 disconnected users, leading to a total of 553 users. The resulting problem has $m = n(n-1) = 305,256$ upper-level variables. Since the edges are undirected, we set half of the variables through a constraint, as shown in the toy example.

We deploy \hyperref[alg:algorithm]{BeeRS} on a CPU\footnote{The current implementation of \hyperref[alg:algorithm]{BeeRS} does not support the Frobenius-norm constraint on the GPU.} with $\varepsilon = \text{1e-3}$, $\gamma = 0.95$, and $\alpha=\frac{m}{100} = 3052.56$, 
(see Appendix \ref{subsubsec:hyperparams_different_n}). We use the default implementation of \hyperref[alg:nad]{NAD} from \cite{filterbubble-repo}, making minor modifications to adapt the code to our specific problem. We run both the default \hyperref[alg:nad]{NAD} and \hyperref[alg:nad]{NAD} with regularization $\lambda = 0.2$, referred to as \hyperref[alg:nad]{NAD$^\star$}. We summarize the runtime as well as the change in polarization and disagreement of each algorithm in Table \ref{filterbubble-table}. 
We observe substantial differences in performance among the three algorithms.

\begin{table}[t]
\caption{Comparison between \hyperref[alg:algorithm]{BeeRS}, \hyperref[alg:nad]{NAD} , and \hyperref[alg:nad]{NAD$^\star$} with $\lambda~=~0.2$. The runtime is averaged over 10 executions.}
\label{filterbubble-table}
\begin{center}
\begin{small}
\begin{sc}
\begin{tabular}{lccc}
\toprule
 & \hyperref[alg:algorithm]{BeeRS} & \hyperref[alg:nad]{NAD} & \hyperref[alg:nad]{NAD$^\star$}   \\
\midrule
Mean runtime, s & 82.9 & 35.9 & 27.3  \\
         Min. runtime, s & 82.2 & 34.2 & 25.8 \\
         Max. runtime, s & 86.1 & 38.5 & 28.3 \\
Change in pol. $P$& $-40.2\%$ & $+128.7\%$ & $-0.7\%$\\
 Change in disag. $D$ & $-21.5\%$ & $+41.4\%$ & $+0.3\%$ \\
\bottomrule
\end{tabular}
\end{sc}
\end{small}
\end{center}
\end{table}

Contrary to \cite{filterbubble}, \hyperref[alg:algorithm]{BeeRS} achieves a decrease in disagreement: the objective of \hyperref[alg:nad]{NAD} and problem \eqref{eq:problem-reddit}. Further, the effect on polarization is different, as \hyperref[alg:algorithm]{BeeRS} leads to a significant decrease of polarization, whereas \hyperref[alg:nad]{NAD} without regularization increases polarization. Although \hyperref[alg:nad]{NAD$^\star$} improves on this aspect, it is still outperformed by \hyperref[alg:algorithm]{BeeRS} in terms of disagreement (and polarization). However, the superior performance of \hyperref[alg:algorithm]{BeeRS} comes at the price of an increased runtime. 

Next, we study the performance mismatch between \hyperref[alg:algorithm]{BeeRS} and \hyperref[alg:nad]{NAD}. To do so, we visualize the intervention mechanism, i.e., the change of weights $w_{ij} - w_{ij}^{(0)}$, in Figures \ref{fig:our_intervention}, \ref{fig:comp_intervention}, and \ref{fig:comp_intervention_reg} for \hyperref[alg:algorithm]{BeeRS}, \hyperref[alg:nad]{NAD}, and \hyperref[alg:nad]{NAD$^\star$}, respectively.

\begin{figure*}[!htb]
     \centering
     \begin{subfigure}[b]{0.33\textwidth}
         \includegraphics[width=\textwidth]{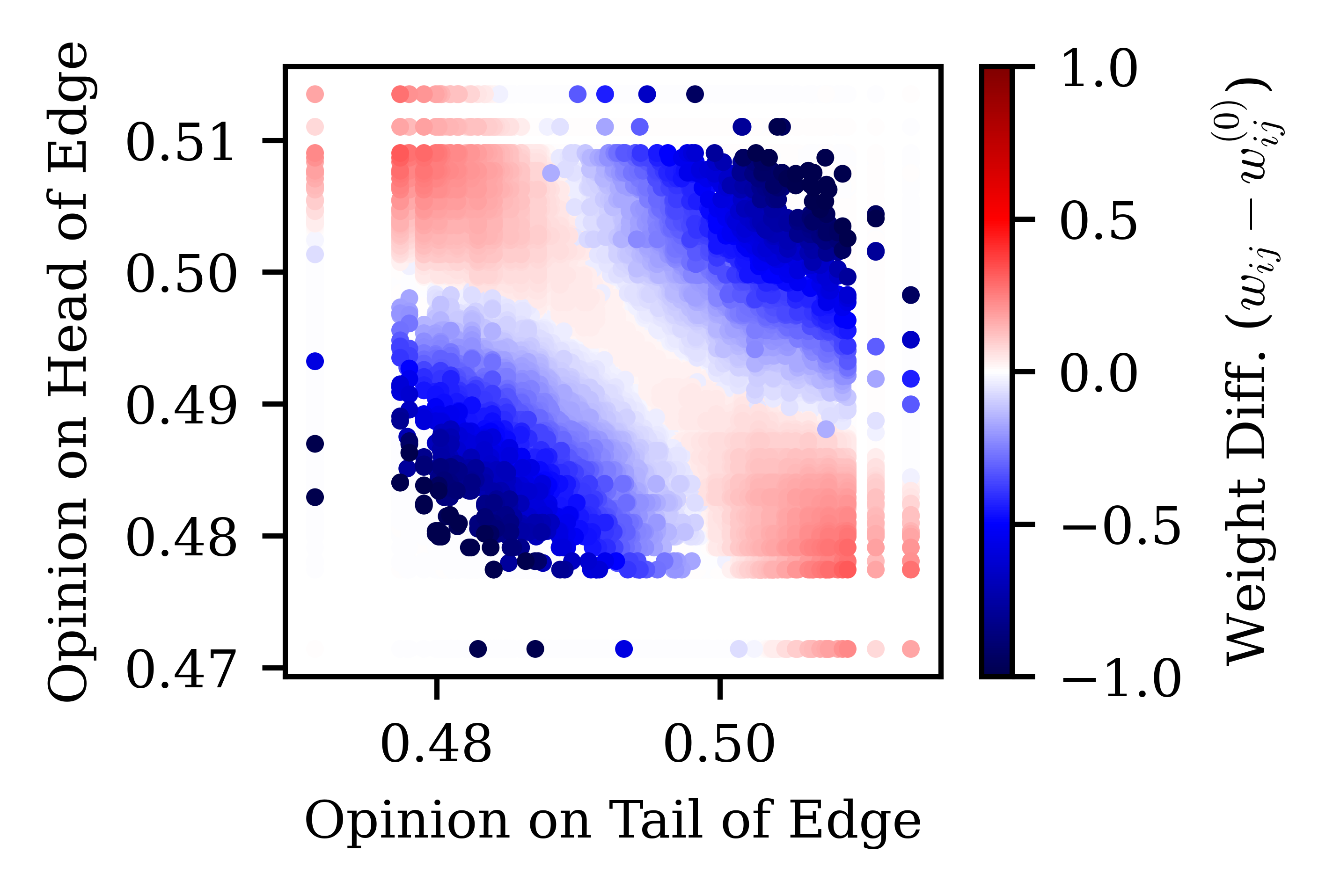}
         \caption{\hyperref[alg:algorithm]{BeeRS}}
         \label{fig:our_intervention}
     \end{subfigure}\hfill
     \begin{subfigure}[b]{0.33\textwidth}
         \includegraphics[width=\textwidth]{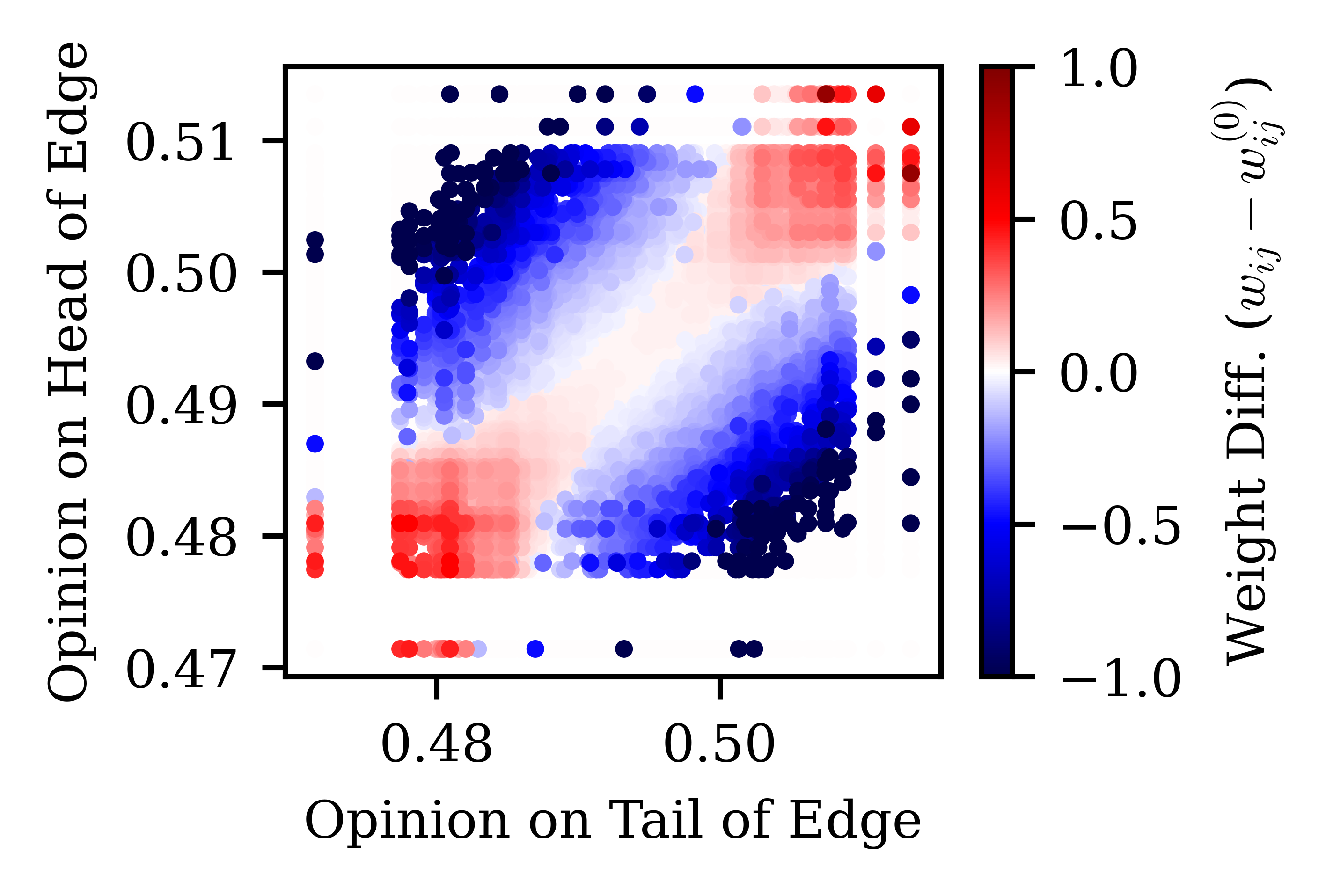}
         \caption{\hyperref[alg:nad]{NAD} without regularization}
         \label{fig:comp_intervention}
     \end{subfigure}\hfill
     \begin{subfigure}[b]{0.33\textwidth}
         \includegraphics[width=\textwidth]{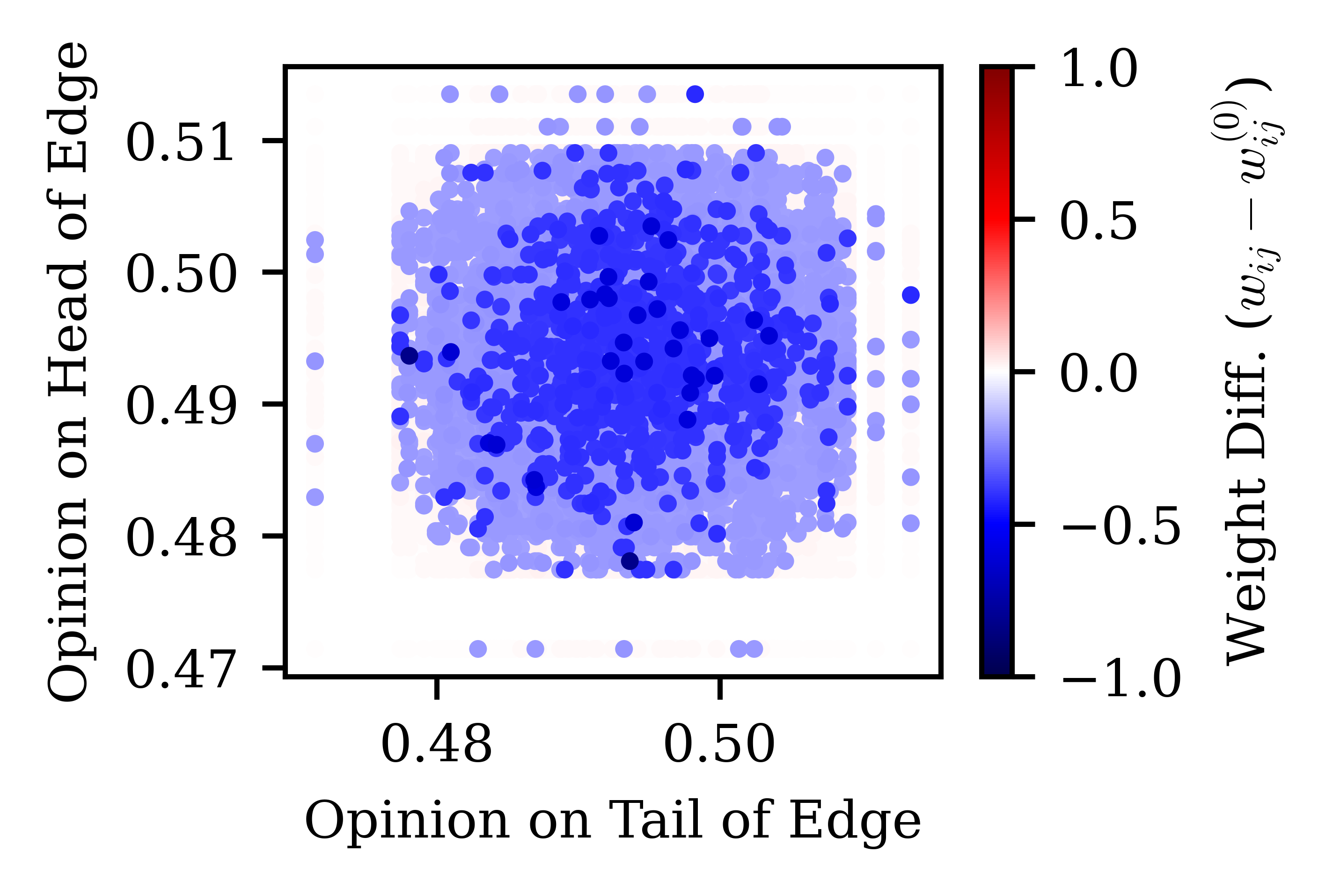}
         \caption{\hyperref[alg:nad]{NAD} with regularization}
         \label{fig:comp_intervention_reg}
     \end{subfigure}\hfill
        \caption{Optimal edge weight modification in relation to external opinions connected by edge computed by three approaches. The horizontal and vertical axes show the external opinion before intervention at the tail and head of an edge, respectively.
        The color encodes the change of edge weight during the intervention, i.e., $w_{ij} - w_{ij}^{(0)}$.}
        \label{fig:three graphs}
\end{figure*}

All figures show the external opinion of users before the intervention on the horizontal and vertical axis. 
The x- and y-coordinate of each point represents the opinion of the user on the tail and head of the edge, respectively.
The color of each point encodes the intervention, i.e., the weight change of the corresponding edge. Since we consider an undirected network, the influence over edges is bidirectional and the plots are symmetric. As we can modify the weight of all edges (even the ones not present in the initial configuration), there are roughly 300,000 points to be plotted. To enhance interpretability of the figures, we order all interventions by increasing absolute value and plot them increasingly on top of each other, i.e., the largest weight changes are drawn on top and are therefore visible. Note that, a large number of edges are not visible as they undergo no change in weight. 

In contrast to Figure \ref{fig:our_intervention} and  \ref{fig:comp_intervention}, we cannot identify any obvious intervention for \hyperref[alg:nad]{NAD$^\star$} in Figure \ref{fig:comp_intervention_reg}. The intervention seems to lack a specific pattern, although we note that many edges experience a large decrease in weight (dark blue dots) and none experience a large increase. 
Hence, we subsequently focus on \hyperref[alg:algorithm]{BeeRS} and \hyperref[alg:nad]{NAD} without regularization.

Notably, \hyperref[alg:algorithm]{BeeRS} and \hyperref[alg:nad]{NAD} seem to behave exactly opposite. \hyperref[alg:algorithm]{BeeRS} mainly reinforces connections between above- and below-average opinions and weakens edges connecting opinions of the same bias. Conversely, \hyperref[alg:nad]{NAD} strengthens edges between similar opinions and weakens the ones between opposing views. 

This behavior aligns with the observations in Section \ref{subsec:toy-example} and explains the superior performance of \hyperref[alg:algorithm]{BeeRS} 
in minimizing disagreement and polarization.
The intervention mechanism followed by \hyperref[alg:nad]{NAD} does not encapsulate the fact that, when we connect opposing opinions under the FJ dynamics, more moderate opinions are formed through averaging, a favorable phenomenon in terms of minimizing disagreement and polarization. This is in line with the weight update rule of \hyperref[alg:nad]{NAD} (\hyperref[line:step2]{line 3}) that ignores the dynamics. Instead, the backpropagation through equilibrium opinions step of \hyperref[alg:algorithm]{BeeRS} is able to explicitly incorporate and exploit this effect.

\section{Conclusion and Future Work} \label{sec:conclusion}
We formulated the task of finding the best network intervention given any differentiable objective function under FJ dynamics as an optimization problem with non-convex constraints,
and we solve it with a gradient-based algorithm.
We demonstrated the flexibility and scalability of our method by successfully deploying it on various optimization problems of different sizes. 
Our algorithm outperforms a state-of-the-art solver in terms of runtime by more than three orders of magnitude, while achieving the same solution quality.
\newline
One limitation of our algorithm is the convexity assumption on $\mathcal{W}$, which precludes the use of binary decision variables: Is an edge present or not? 
Our framework could incorporate binarty decisions by relaxing the binary constraints or employing heuristics. 
Preliminary numerical experiments indicate this to be a promising future research direction.

\newpage

\bibliographystyle{plainnat.bst}
\bibliography{example_paper}

\newpage

\section{Appendix}

\subsection{Computation of $J_1F(w, y)$} \label{subsec:computing-j1f}
Combining the definition of $A(w)$ in \eqref{eq:pg-zero} with the definition $F(w,y) := A(w)y -s$, we arrive at
\begin{gather} \label{eq:f-for-j1f}
    F(w,y) = 
    \underbrace{\begin{bmatrix}
        1+\sum_{j \in V} w_{ij} & -w_{12} & \cdots & -w_{1n}\\
        \vdots & & \ddots & \vdots\\
        -w_{n1} & -w_{n2} & \cdots &  1+\sum_{j \in V}w_{ij}
    \end{bmatrix}}_{A(w)}
    y - s.
\end{gather}
Now we can compute $J_1F(w,y)$, the partial Jacobian of $F(w,y)$ with respect to $w$, and show that it is continuous in both arguments. Recall the definition $w \coloneqq [w_{12}, \ldots, w_{1n}, w_{21}, \ldots, w_{2n}, \ldots, w_{n1}, \ldots w_{n(n-1)}]^\top \in \mathbb{R}^{m}$ with $m = n(n-1)$ (there are no self-loops, thus we drop the entries $w_{ii}$ for computational efficiency).
\begin{lemma} \label{lemma:j1f}
    The partial Jacobian of $F(w,y)$ with respect to $w$, i.e., $J_1F(w,y) \in \mathbb{R}^{n\times m}$, is continuous in both arguments.
\end{lemma}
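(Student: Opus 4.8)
The plan is to write $J_1F(w,y)$ entrywise and observe that each entry is a polynomial (in fact affine) in the components of $y$, with constant coefficients determined by the combinatorial structure of $A(w)$; continuity in both arguments then follows immediately since polynomials are continuous. Concretely, from \eqref{eq:f-for-j1f} the $i$-th component of $F(w,y)$ is
\begin{equation*}
    F_i(w,y) = y_i + \sum_{k \in V} w_{ik}(y_i - y_k) - s_i .
\end{equation*}
First I would differentiate $F_i$ with respect to a generic free variable $w_{pq}$ (with $p \neq q$, since self-loops are excluded from $w$). The key observation is that $w_{pq}$ appears in $F_i$ only when $i = p$, and there it contributes the single term $w_{pq}(y_p - y_q)$. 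Hence
\begin{equation*}
    \frac{\partial F_i}{\partial w_{pq}}(w,y) = \begin{cases} y_p - y_q & \text{if } i = p,\\ 0 & \text{otherwise.}\end{cases}
\end{equation*}
This gives an explicit description of $J_1F(w,y) \in \mathbb{R}^{n \times m}$: in the column indexed by the pair $(p,q)$, the only nonzero entry is in row $p$ and equals $y_p - y_q$.

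Having this closed form, the continuity claim is then routine: every entry of $J_1F(w,y)$ is either identically zero or of the form $y_p - y_q$, which is a continuous (indeed linear) function of $y$ and does not depend on $w$ at all. A finite matrix whose entries are each continuous in $(w,y)$ is continuous in $(w,y)$, which is exactly the assertion of the lemma. I would also remark, for use in Proposition~\ref{prop:sensitivity}, that this simultaneously shows $J_1F$ is continuous, completing the verification that $F$ is continuously differentiable (the $J_2F = A(w)$ part being immediate).

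There is essentially no hard step here; the only mild subtlety is bookkeeping the indexing convention for $w$ (the vectorization that drops the diagonal entries $w_{ii}$, as recalled just before the lemma statement) so that the columns of $J_1F$ are correctly matched to the ordered list $[w_{12},\ldots,w_{1n},w_{21},\ldots]$. I would state the per-column form explicitly in that ordering to make the claim unambiguous, and then invoke componentwise continuity to conclude. No result beyond the definition of $F$, the formula \eqref{eq:f-for-j1f} for $A(w)$, and elementary continuity of polynomials is needed.
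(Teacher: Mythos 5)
Your proposal is correct and matches the paper's proof: the paper likewise writes out $J_1F(w,y)$ explicitly, finding that the column indexed by $(p,q)$ has its only nonzero entry $y_p - y_q$ in row $p$ (see \eqref{eq:j1f}), and concludes continuity from the entrywise form. Your derivation via $F_i(w,y) = y_i + \sum_{k} w_{ik}(y_i - y_k) - s_i$ is just a slightly more explicit route to the same matrix.
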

\begin{proof}
    Considering \eqref{eq:f-for-j1f}, it is easy to see that $J_1F(w,y)$ is given by
    \begin{gather} \label{eq:j1f}
        \scalebox{0.85}{ 
        $J_1F(w,y) =
        \begin{bmatrix}
            y_1 - y_2 & y_1 - y_3 & \cdots & y_1 - y_n & 0 & \cdots & 0 &  0 & 0 & \cdots & 0 \\
             & & & & & \mathclap{\vdots} & & & & &  \\
            0 & 0 & \cdots &  0 & 0 & \cdots & 0 & y_n - y_1 & y_n - y_2 & \cdots & y_n - y_{n-1}
        \end{bmatrix}$},
    \end{gather}
    which is continuous.
\end{proof}

\subsection{Implementation Details} \label{subsec:implementation-detail}
We implement the algorithm in Python 3.12. We provide both a GPU- and a CPU-compatible implementation. The GPU version performs most computations with JAX \cite{jax2018github} and has some NumPy \cite{numpy} dependencies. We solve the linear systems \eqref{eq:pg-zero} and \eqref{eq:v} with a jax.scipy implementation of the GMRES algorithm \cite{GMRES}.
For the projection, we use the jaxopt \cite{jaxopt_implicit_diff} BoxOSQP solver. JAX functions are just-in-time compiled if applicable. We represent the adjacency matrix of the network and the Jacobians as sparse csr\_array, which allows efficient handling of large datasets. The upper-level gradients $\nabla_{1,2} \varphi$ are computed with the automatic differentiation (autodiff) functionalities of JAX.

The CPU version is based on NumPy. Unlike the GPU implementation, we allow both sparse and dense representations of the network's adjacency matrix and the Jacobians. For dense matrices, we rely on NumPy to solve the linear systems \eqref{eq:pg-zero} and \eqref{eq:v}, for sparse matrices, we use a SciPy \cite{2020SciPy-NMeth} implementation of the GMRES algorithm. 
In both cases, we use CVXpy \cite{diamond2016cvxpy, agrawal2018rewriting} to perform the projections with the following solvers: Clarabel \cite{Clarabel_2024} for quadratic programs, and SCS \cite{scs} for second-order cone programs. The upper-level gradients $\nabla_{1,2} \varphi$ are computed with the automatic differentiation (autodiff) functionalities of PyTorch \cite{pytorch}.

\subsection{Hardware Specifications} \label{app:hardware}
Experiments were conducted on a Windows 11 machine using Windows Subsystem for Linux (WSL) version 2, either on an Intel\textsuperscript{\textregistered} Core\texttrademark i7-11700F CPU at 2.50 GHz with 8 cores and 16 logical processors and 32 GB of RAM or on a NVIDIA GeForce RTX 3060 Ti GPU, with 8 GB of GDDR6 video memory, running CUDA 12.6 and cuDNN 9.5.1. 

\subsection{Choosing the Hyperparameters -- Ablation Study} \label{app:hyperparameters}

As discussed in Section \ref{sec:algorithm}, \hyperref[alg:algorithm]{BeeRS} employs projected gradient descent with momentum \eqref{eq:momentum-pgd}. This method uses two hyperparameters, namely step size $\alpha$ and momentum parameter $\gamma$. We conduct an ablation study to find parameters leading to satisfying results in terms of runtime and minimum cost.

We consider the same setup as in Section \ref{subsec:scalability}, i.e., problem \eqref{eq:problem-ipopt} on the DBLP dataset version 10 \cite{dblp-dataset}. We consider the first $n = 1000$ users and set $b = 100$. We initialize all variables, i.e., all connections to the influencer, with the value 0. We use a tolerance of $\varepsilon = \text{1e-3}$.

We now solve this problem with \hyperref[alg:algorithm]{BeeRS}. We grid the hyperparameters for values of $\alpha \in (0, 500]$ and $\gamma \in [0, 0.95]$. To enhance visibility, we display only some of the grid points in Figure \ref{fig:ablation-study}, in particular $\alpha \in \{ 1, 2, \ldots, 50, 60, \ldots, 100, 200, \ldots, 500 \}$ and $\gamma \in \{0.0, 0.3, 0.6, 0.95\}$. In the left subplot, we show the number of iterations until convergence on the vertical axis as a function of $\alpha$ on the horizontal axis. According to the legend, we plot a distinct line for every choice of $\gamma$. In the right subplot, we show analogously the minimum achieved cost on the vertical axis. Due to the deterministic nature of \hyperref[alg:algorithm]{BeeRS}, we run the simulation with every set of distinct $\alpha$ and $\gamma$ exactly once.

\begin{figure*}[ht]
\begin{center}
\centerline{\includegraphics[width=\columnwidth]{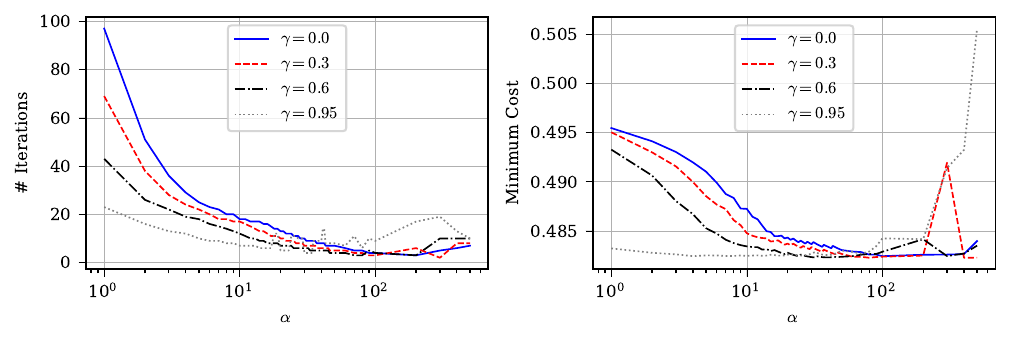}}
\caption{We vary the step size $\alpha$ and display the resulting number of iterations (left) and the achieved minimum cost (right) for some momentum parameters $\gamma$.}
\label{fig:ablation-study}
\end{center}
\end{figure*}

We observe that, for small values of $\alpha$, we achieve convergence with fewer iterations for larger values of $\gamma$ than for smaller values of $\gamma$. For larger values of $\alpha$, the number of iteration until convergence first get closer for different values of $\gamma$, and some oscillations begin to appear for $\gamma = 0.95$, and later, the number of iterations until convergence starts to increase again.

Regarding the minimum cost achieved, we observe that for small values of  $\alpha$, $\gamma=0.95$ clearly outperforms all other choices. When $\alpha$ is increased, the performance for other values of $\gamma$ improves, and at a medium value of $\alpha$, small values of $\gamma$ achieve a lower cost than $\gamma = 0.95$. This is partly due to the fact that the minimum cost for $\gamma = 0.95$ starts to increase at medium values of $\alpha$. For large values of $\alpha$, the minimum achieved cost abruptly rises for some values of $\gamma$, and gently increases for other values of $\gamma$.

Considering these results, a small value of $\alpha$, e.g., $\alpha = 10$, together with $\gamma = 0.95$ provides a good compromise between the number of iterations until convergence and the achieved minimum cost, and effectively avoids numerical instabilities causing convergence issues for larger values of $\alpha$.

\subsubsection{Choosing $\alpha$ and $\gamma$ for different Number of Users $n$} \label{subsubsec:hyperparams_different_n}

As we vary the number of users $n$ during the simulations carried out in Section \ref{sec:numerics}, we deviate from the value $n = 1000$, and thus invalidate the ablation study from the previous section. In Section \ref{subsec:scalability}, we only perform the first iteration of \hyperref[alg:algorithm]{BeeRS}. We argue that the step size is irrelevant in this case, as it at most marginally affects the per-iteration runtime. For simulations in Section \ref{subsec:ipopt}, the step size is more important, as we run \hyperref[alg:algorithm]{BeeRS} until convergence. Thus, the total runtime is directly related to the number of iterations, which can be influenced by choosing the step size. To provide a fair comparison between different numbers of users $n$, we want to keep the number of iterations in a comparable range, such that the increase in total runtime can be attributed to a higher per-iteration runtime.

Through experiments, we observe that we can keep the number of iterations until convergence in a comparable range if we choose step size $\alpha$ proportional to $n$. This is supported by Figure \ref{fig:hypergrad-norm}, where we show for problem \eqref{eq:problem-ipopt} the $\ell_2$ norm of the hypergradient of the initial iteration on the left vertical axis, the number of iterations until convergence for two different step sizes on the right vertical axis, both as a function of the number of users $n$ on the horizontal axis. We note that a constant step size leads to an increasing number of iterations until convergence, while the step size $\alpha = \frac{n}{100}$ keeps the number of iterations until convergence almost constant. This can be explained by the observation that the norm of the hypergradient decreases with increasing $n$. Combining the optimal step size from the previous section ($\alpha = 10$ for $n = 1000$) with the observation that $\alpha \propto n$ leads to small variations in iterations until convergence, we select $\alpha = \frac{n}{100}$.

The shrinking hypergradient has an intuitive explanation. The larger the number of users in a social network, the smaller the effect of the individual user on the overall polarization. Therefore, tweaking only the weight associated with some user $i$ has a diminishing effect on the objective, which is captured by the hypergradient's components.

Note that we use slightly different $\alpha$ and $\gamma$ for the Simulation in Section \ref{subsec:entire-dblp}. Instead of $\alpha = \frac{n}{100} \approx 30,000$ and $\gamma = 0.95$ as suggested above, we use $\alpha = 10,000$ and $\gamma = 0.9$, as the previous values lead to oscillations in the objective value when approaching the optimal solution. We could mitigate these effects by lowering both $\alpha$ and $\gamma$.

\begin{figure*}[ht]
\begin{center}
\centerline{\includegraphics[]{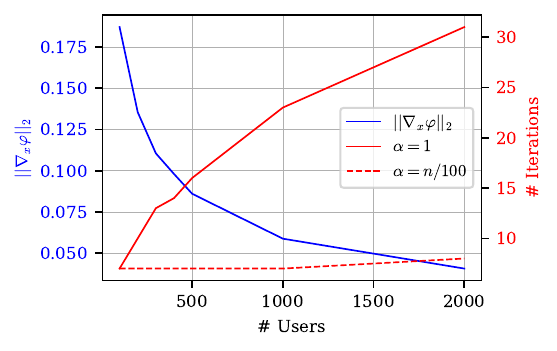}}
\caption{We plot for problem \eqref{eq:problem-ipopt} the $\ell_2$ norm of the hypergradient of the initial iteration on the left vertical axis, and the number of iterations until convergence for a linear and constant step size on the right vertical axis for a range of number of users $n$ on the horizontal axis.}
\label{fig:hypergrad-norm}
\end{center}
\end{figure*}

We remark that the previous considerations are not necessarily valid for the simulations in Section \ref{subsec:literature} due to the different problem formulation. However, similar to the previous cases, we select $\alpha = \frac{m}{100}$ and $\gamma = 0.95$. Note that we replaced $n$ by $m = n(n-1)$, since every user has $n - 1$ decision variables in Section \ref{subsec:literature}, compared to 1 decision variable per user in Section \ref{subsec:ipopt}. We argue that a step size proportional to the number of decision variables is an equally valid choice, justified by the intuition that, as the number of variables increases, the influence of any single variable on the objective function diminishes. Additionally, since the number of variables is not varied within Section \ref{subsec:literature}, we argue that the optimal choice for $\alpha$ is less important than in Section \ref{subsec:ipopt}, where we want a similar number of iterations until convergence for different problem sizes. Finally, we observe that the choice $\alpha = \frac{m}{100}$ and $\gamma = 0.95$ lead to satisfying results (convergence after 5 iterations, not oscillations), which makes the search for other parameter values obsolete.

\subsection{Further Results of Section \ref{subsec:ipopt}} \label{subsec:additional-ipopt}

In addition to the results provided in Section \ref{subsec:ipopt}, we provide in Table \ref{ipopt-table} the relative suboptimality for the compared approaches. We define the relative suboptimality as
\begin{equation*}
    \frac{\lvert\varphi - \varphi^\star \rvert}{\lvert\varphi^\star\rvert}, 
\end{equation*}
where $\varphi^\star$ is the best solution achieved by any of the two approaches.

\begin{table}[H]
\caption{Comparison between \hyperref[alg:algorithm]{BeeRS} and \texttt{IPOPT} in terms of relative suboptimality.}
\label{ipopt-table}
\begin{center}
\begin{small}
\begin{sc}
\begin{tabular}{lccccccc}
\toprule
Number of users $n$ & 100 & 200 & 300 & 400 & 500 & 1000 & 2000 \\
\midrule
Rel. subopt. of \texttt{IPOPT} & $4.63\mathrm{e}\text{-3}$ & $4.42\mathrm{e}\text{-3}$ & $4.67\mathrm{e}\text{-3}$ & $4.87\mathrm{e}\text{-3}$ & $4.72\mathrm{e}\text{-3}$ & $4.30\mathrm{e}\text{-3}$ & $4.64\mathrm{e}\text{-3}$\\
Rel. subopt. of \hyperref[alg:algorithm]{BeeRS} &0 &0 &0 &0 &0 &0 &0 \\
\bottomrule
\end{tabular}
\end{sc}
\end{small}
\end{center}
\end{table}

\subsection{Network Administrator Dynamics - Alg.\ \ref{alg:nad}} \label{app:nad}
In the following, we repeat the algorithm of \cite{filterbubble}.

\begin{algorithm}[H]
   \caption{NAD (\underline{N}etwork \underline{A}dministrator \underline{D}ynamics)}
   \label{alg:nad}
\begin{algorithmic}[1]
   \WHILE{not converged}
    \STATE \label{line:step1} $y_k \leftarrow$ For given network weights $w$, compute the equilibrium opinions with \eqref{eq:pg-zero} 
    \STATE \label{line:step2} $w_k \leftarrow$ For fixed opinions $y$, solve \eqref{eq:problem-reddit-simplified} 
    \STATE $k \leftarrow k+1$
    \ENDWHILE
\end{algorithmic}
\end{algorithm}

\end{document}